\newtheorem{lemma}{Lemma}
\begin{document}

\title{Two-Particle Scattering on Non-Translation Invariant Line Lattices}
\author{Luna Lima e Silva}
\affiliation{Instituto de F\'isica, Universidade Federal Fluminense, Niter\'oi, RJ, 24210-340, Brazil}
\author{Daniel J. Brod}
\affiliation{Instituto de F\'isica, Universidade Federal Fluminense, Niter\'oi, RJ, 24210-340, Brazil}

\begin{abstract}
Quantum walks have been used to develop quantum algorithms since their inception, and can be seen as an alternative to the usual circuit  model; combining single-particle quantum walks on sparse graphs with two-particle scattering on a line lattice is sufficient to perform universal quantum computation. In this work we solve the problem of two-particle scattering on the line lattice for a family of interactions without translation invariance, recovering the Bose-Hubbard interaction as the limiting case. Due to its generality, our systematic approach lays the groundwork to solve the more general problem of multi-particle scattering on general graphs, which in turn can enable design of different or simpler quantum gates and gadgets. As a consequence of this work, we show that a CPHASE gate can be achieved with high fidelity when the interaction acts only on a small portion of the line graph.
\end{abstract}

\maketitle

\section{Introduction}

There are two well-known models of quantum walks, discrete- \cite{dtqw1,dtqw2} and continuous- \cite{qcdectrees2} time quantum walks. The former is closely related to classical random walks---it is composed of a quantum particle and a quantum coin, and discrete time unitary dynamics mimics a coin toss that determines the direction of a step taken by the particle. 
Continuous-time quantum walks, on the other hand, consist of a quantum particle that evolves by the Schrodinger equation, with a Hamiltonian on a lattice or graph that is analogous to the Laplacian in free space. This model of quantum walk is the focus of our work.

Quantum walks were first studied as a tool for building quantum algorithms. They were shown to exponentially outperform their classical counterparts \cite{qcdectrees1,qcdectrees2,qcdectrees3},  and an important milestone was the discovery that they can be used to perform universal quantum computation, first by Childs \cite{singleparticleqw}, later improved in collaboration with Gosset and Webb \cite{multiparticleqw}. 

In \cite{multiparticleqw}, the quantum gates and other gadgets operate mostly in the regime where one particle is propagating far from any others, essentially performing a single-particle quantum walk on a complicated graph. Interactions are required only for two-qubit controlled-phase gates, where two particles scatter off of each other on the simplest graph, a very long line, via a translation-invariant Hamiltonian. This (approximate) translation invariance implies conservation of the momentum which, together with energy conservation, severely restricts the outcome of the scattering, as the momenta of the particles are individually preserved.

In this work, we solve the more general problem of two-particle quantum walks on the infinite line with an interaction that does not display translation symmetry. More concretely, we consider two particles that propagate on an infinite line graph, but only interact in a finite set of contiguous vertices via the Bose-Hubbard Hamiltonian \cite{twoparticlehubbard}. Inspired by scattering theory on continuous space, we employ the Lippmann-Schwinger formalism \cite{sakurai} to find the S-matrix for this scenario. Crucially, in contrast to previous works \cite{levinsonthm2,varbanov, multiparticleqw} that only compute the S-matrix for fixed particle momenta, our approach explicitly treats the S-matrix as an operator on the full Hilbert space, allowing us to describe momentum exchanges between the particles.

After solving the finite-region scattering problem, we also investigate how the solution converges to the previously-known results for the infinite line \cite{multiparticleqw}. We analytically prove this convergence in the asymptotic limit, and numerically investigate the rate of convergence. Surprisingly, our results suggest that a few dozen interaction vertices might already be sufficient to obtain a high-fidelity controlled-phase gate.

By not relying on translation symmetry, our results provide new tools to analyse the scattering of particles on complex graphs, adding to the toolbox of \cite{multiparticleqw}. In particular, we 
lay the groundwork to solve the more general problem of multi-particle scattering on general graphs, which might lead to better and more efficient quantum gates and gadgets for quantum computation. We also note that multi-particle scattering theory has broad applications, e.g.\ in condensed matter and particle physics, and its discrete-space counterpart has been much less explored, which suggests applications for our results beyond the field of quantum computing.

This work is organized as follows. We review some basic notions about continuous time quantum walks in Sec.\ \ref{sec:scattering}, and we solve the scattering problem in Sec.\ \ref{subsec:LipSch}. 
In Sec.\ \ref{subsec:asymp} we show analytically that in the limit of interaction on all sites we recover known results, and in Sec.\ \ref{sec:numerics} we show numerical results indicating the corresponding rate of convergence.

\section{Single-particle scattering} \label{sec:scattering}

Let us begin by briefly motivating the general setup for continuous-time quantum walks \cite{singleparticleqw,multiparticleqw,levinsonthm2}. Single-particle quantum walks on a one dimensional lattice can be seen as the discrete analogue of free-particle evolution on continuous space. The latter is described by the Laplacian (up to some constants):
\begin{equation}
	H^{\textit{continuous space}}=-\dfrac{\hbar^2}{2m}\nabla ^2 = -\dfrac{\hbar^2}{2m}\dfrac{\partial^2}{\partial x^2}.
\end{equation}
To discretize the one dimensional Laplacian, we can look to the natural definition of the derivative as a limit
\begin{equation}
	\frac{\partial^2 f}{\partial x^2}=\lim_{\epsilon\rightarrow 0}\frac{[f(x+\epsilon)-f(x)]-[f(x)-f(x-\epsilon)]}{\epsilon^2}=\lim_{\epsilon\rightarrow 0}\frac{f(x+\epsilon)+f(x-\epsilon)-2f(x)}{\epsilon^2}.
\end{equation}
If we consider a particle on a one dimensional lattice, with spacing $\epsilon>0$ and amplitudes $\braket{j|\psi}=\psi(j)$ defined at every lattice point $j\epsilon$, with $j\in\mathbb{Z}$, a natural choice for the Hamiltonian is
\begin{equation}
	\bra{j}H^{\textit{lattice}}\ket{\psi}=-\dfrac{\hbar^2}{2m\epsilon^2}(\psi(j+1)+\psi(j-1)-2\psi(j)) ,
\end{equation}
or
\begin{equation}
	H^{\textit{lattice}}=-\dfrac{\hbar^2}{2m\epsilon^2}\sum_{j\in\mathbb{Z}} \ket{j}\bra{j+1}+\ket{j+1}\bra{j}-2\ket{j}\bra{j}.
\end{equation}

We now adopt some simplifications: the Hamiltonian above has a term proportional to the identity which adds an unimportant phase factor to the evolved state, so we drop it. We choose units such that $\hbar^2/2m\epsilon^2=1$. Also, following a convention of previous literature, we choose the negative of this Hamiltonian. With these changes, the Hamiltonian now reads
\begin{equation}
H^{\textit{lattice}}=\sum_{j\in\mathbb{Z}} \ket{j}\bra{j+1}+\ket{j+1}\bra{j}.
\end{equation}
Notice that this is exactly the \textit{adjacency matrix} for the infinite line graph, i.e., a matrix whose $ij$ entry is $1$ if $\{i,j\}$ is an edge of the graph and $0$ otherwise. The same procedure works for higher dimensional lattices, which suggests that the adjacency matrix is the natural choice of Hamiltonian for quantum walks on graphs.

Finding the corresponding eigenstates requires solving the Schr\"{o}dinger equation
\begin{equation}
	H^{\textit{lattice}}\ket{\psi}=E\ket{\psi}.
\end{equation}
which, in the position basis, results in the following difference equation
\begin{equation}
\begin{aligned}
	\psi(j+1)+\psi(j-1)&=E\psi(j) &&  \forall  j\in\mathbb{Z}.
\end{aligned}
\end{equation}
Considering the ansatz $\psi(j)=z^j$, there is a solution if $z+z^{-1}=E$. Since the energy must be a real number, either $z\in\mathbb{R}$ or $z$ lies on the unit circle. The only states that can be delta-function normalizable are the ones with $|z|=1$, so we can choose $z=e^{-ip}$ for some $p\in [-\pi,\pi)$.

From now on, the eigenstate with momentum $p$ is denoted by $\ket{p}$. The energy of this state is $E=z+z^{-1}=2\cos{p}$ and its amplitudes are given by 
\begin{equation}
\braket{j|p}=\frac{1}{\sqrt{2\pi}}e^{-ijp},
\end{equation}
with normalization $\braket{k|p}=\delta(k-p)$. Due to our choice of phase convention, the velocity of a wavepacket with momentum centered on $p$ is $\frac{dE}{d(-p)}=2\sin{p}$, which is positive for $p\in (0,\pi)$ and negative for $p\in (-\pi,0)$.

\section{Two-particle scattering} \label{sec:twoparticle}

\subsection{Two-particle free states} \label{subsec:freestates}

The usual Hamiltonian describing two noninteracting particles evolving with respect to Hamiltonians, say, $H_A$ and $H_B$, is the Kronecker sum $H^{(2)}=H_A\otimes\mathds{1}+\mathds{1}\otimes H_B$. In this case, given eigenstates for both particles $\ket{\psi}_A$ and $\ket{\phi}_B$ with energies $E_{\psi}$ and $E_{\phi}$, respectively, the tensor product $\ket{\psi}_A\otimes\ket{\phi}_B$ is a solution of the two-particle Schrodinger equation, with energy $E=E_{\psi}+E_{\phi}$. In our case, for distinguishable particles on a lattice, we write $\ket{p_1}\ket{p_2}=\ket{p_1p_2^{\textit{dist}}}$ with amplitudes
\begin{equation}
\begin{aligned}
\braket{j_1 j_2|p_1 p_2^{\textit{dist}}}=\frac{1}{2\pi} e^{-i(p_1j_1+p_2j_2)}.
\end{aligned}
\end{equation}
For bosons and fermions we have, respectively\footnote{For the special case where $p_1=p_2$, the correct states are $\ket{p_1 p_2^{\textit{boson}}}=\ket{p_1 p_2^{\textit{dist}}}$ and $\ket{p_1 p_2^{\textit{fermion}}}=0$.},
\begin{equation}
\begin{aligned}
\braket{j_1 j_2|p_1 p_2^{\textit{boson}}}=\frac{1}{2\pi} \dfrac{e^{-i(p_1j_1+p_2j_2)}+e^{-i(p_2j_1+p_1j_2)}}{\sqrt{2}},\\
\braket{j_1 j_2|p_1 p_2^{\textit{fermion}}}=\frac{1}{2\pi} \dfrac{e^{-i(p_1j_1+p_2j_2)}-e^{-i(p_2j_1+p_1j_2)}}{\sqrt{2}}.
\end{aligned}
\end{equation}
To avoid double counting, we always consider that $p_1 < p_2$ for bosonic and fermionic states. From now on, the notation $\ket{p_1 p_2}$ will be used to represent any of the states above when the distinction is unimportant.
It is also convenient to define a constant $b$ such that
\begin{equation}
\begin{aligned}
\braket{j j|p_1 p_2}=\frac{b}{2\pi}e^{-i(p_1+p_2)j},
\end{aligned}
\end{equation}
where $b=1$ for distinguishable particles, $b=\sqrt{2}$ for bosons and $b=0$ for fermions.

\subsection{Two-particle interacting quantum walk} \label{subsec:LipSch}

We now consider the scattering of two particles propagating on an infinite line with a Bose-Hubbard type interaction on $2L+1$ contiguous vertices, namely,
\begin{equation}
\begin{aligned}H=H^{(2)}+V:=\sum_{w=1}^2 \sum_{j\in\mathbb{Z}} \left(\ket{j}\bra{j+1}_w+\ket{j+1}\bra{j}_w\right) + U\sum_{j=-L}^L\ket{jj}\bra{jj},
\end{aligned}
\end{equation}
where $\ket{a}\bra{b}_1\equiv\ket{a}\bra{b}\otimes\mathds{1}$, $\ket{a}\bra{b}_2\equiv\mathds{1}\otimes\ket{a}\bra{b}$, so that $H^{(2)}$ is the Hamiltonian for two free particles on the line and $V$ is the interaction, which is proportional to some constant $U$. In other words, the particles interact with each other when they are on the same site, but only for the central $2L+1$ sites. 

We are interested in computing the S-matrix, defined by \cite{weinberg}:
\begin{equation}
\begin{aligned}
S_{k_1 k_2; p_1 p_2}=\braket{k_1 k_2 ^-|p_1 p_2^+},
\end{aligned}
\end{equation}
where $\ket{p_1,p_2^\pm}$ are scattering states, i.e., eigenstates of the interacting Hamiltonian which converge, in the asymptotic past ($+$) or future ($-$) to the free-particle states  $\ket{p_1 p_2}$. In other words, $S_{k_1 k_2; p_1 p_2}$ is the probability amplitude of preparing the state $\ket{p_1 p_2}$ in the distant past and measuring the state $\ket{k_1 k_2}$ in the distant future. 
The scattering and free-particle states can be related by the Lippmann-Schwinger equation
$$\ket{{p_1 p_2}^\pm}=\ket{p_1 p_2} + (E-H^{(2)}\pm i\epsilon)^{-1}V\ket{{p_1 p_2}^\pm},$$
where we take the limit $\epsilon\rightarrow 0$ at the end of the computation.
After some manipulations, we can arrive at a simpler expression for the S-matrix \cite{taylor}:
\begin{equation}
\begin{aligned}
S_{k_1 k_2; p_1 p_2}=\delta(k_1-p_1)\delta(k_2-p_2)-2\pi i\delta(E_{k_1k_2}-E_{p_1p_2})\bra{k_1 k_2}V\ket{{p_1 p_2}^+},
\label{S}
\end{aligned}
\end{equation}
where $E_{q_1q_2}=2\cos{q_1}+2\cos{q_2}$.

From Eq.\ (\ref{S}), and due to the nature of the interaction, we only need to compute amplitudes when particles are on the same site, namely, terms of the type $\braket{j j|p_1 p_2}$. From the definition of $V$,
\begin{equation}
\begin{aligned}
\bra{k_1 k_2}V\ket{{p_1 p_2}^+}=\frac{bU}{2\pi}\sum_{l=-L}^L e^{i(k_1+k_2)l}\braket{ll|{p_1 p_2}^+}.
\label{Smatrix}
\end{aligned}
\end{equation}
An equation for $\braket{ll|{p_1 p_2}^+}$ can be found by taking the inner product of the Lippmann-Schwinger equation with $\bra{ll}$ for $-L\leq l\leq L$:
\begin{equation}
\begin{aligned}
\braket{ll|{p_1 p_2}^+}=\frac{b}{2\pi}&e^{-i(p_1+p_2)l}\\&+\frac{U}{4\pi^2}\sum_{m=-L}^L \braket{mm|{p_1 p_2}^+}\int_{-\pi}^\pi\int_{-\pi}^\pi\frac{e^{-i(k_1+k_2)(l-m)}}{E-2\cos{k_1}-2\cos{k_2}+ i\epsilon}dk_1 dk_2,
\label{lipp}
\end{aligned}
\end{equation}
where we used the resolution of the identity $\mathds{1}=\int_{-\pi}^\pi\int_{-\pi}^\pi \ket{k_1 k_2^{\textit{dist}}}\bra{k_1 k_2^{\textit{dist}}}dk_1 dk_2$. Let us define new variables $k_\pm=\frac{k_1\pm k_2}{2}$ and the following family of integrals (discussed in more detail in Appendix \ref{apx:A})
\begin{equation}
\begin{aligned}
J(E,n)&=\dfrac{1}{2\pi}\lim_{\epsilon\rightarrow 0}\int_{-\pi}^\pi\int_{-\pi}^\pi\frac{e^{-i(k_1+k_2)n}}{E-2\cos{k_1}-2\cos{k_2}+ i\epsilon}dk_1 dk_2\\
&=\dfrac{1}{2\pi}\lim_{\epsilon\rightarrow 0}\int_{-\pi}^\pi e^{-2ik_+n} \int_{-\pi}^\pi\frac{1}{E-4\cos{k_+}\cos{k_-}+ i\epsilon}dk_- dk_+.
\label{J_pm}
\end{aligned}
\end{equation}

Finally, to find the S-matrix we must solve the system of $2L+1$ linear equations \eqref{lipp}. Simplifying notation, we can write
\begin{equation}
\begin{aligned}
\sum_{m=-L}^L (2\pi\delta_{lm}-UJ(E,l-m))x_m=b(c_{p_1 p_2})_l\,,
\label{linsys}
\end{aligned}
\end{equation}
where $x_m=\braket{mm|{p_1 p_2}^+}$ and $(c_{p_1 p_2})_l:=e^{-i(p_1+p_2)l}$. Note that $-L\leq m,l\leq L$. The coefficients $t(n):=2\pi\delta_{n0}-UJ(E,n)$ form an $N\times N$ \textit{Toeplitz} matrix
\begin{equation}
\begin{aligned}
T_N=\begin{bmatrix}
t(0) & t(-1) & t(-2) &  \cdots & t(-(N-1))\\
t(1) & t(0) & t(-1) & \cdots & t(-(N-2))\\
t(2) & t(1) & t(0) & \cdots & t(-(N-3))\\
\vdots & \vdots &\vdots & \ddots & \vdots \\
t(N-1) & t(N-2) & t(N-3) & \cdots & t(0)
\end{bmatrix},
\end{aligned}
\end{equation}
such that Eq.\ \eqref{linsys} can be written as the matrix equation $T_N x=bc_{p_1 p_2}$. Looking back at Eq.\ \eqref{Smatrix}, the factor related to the interaction is
\begin{equation}
\begin{aligned}
\bra{k_1 k_2}V\ket{{p_1 p_2}^+}=\frac{bU}{2\pi}c_{k_1 k_2}^\dagger x = \frac{b^2 U}{2\pi}c_{k_1 k_2}^\dagger T_N^{-1} c_{p_1 p_2},
\label{Smatrix2}
\end{aligned}
\end{equation}and finally the S-matrix for finite $N$ is
\begin{equation}
\begin{aligned}
S^N_{k_1 k_2; p_1 p_2}=\delta(k_1-p_1)\delta(k_2-p_2)-i b^2 Uc_{k_1 k_2}^\dagger T_N^{-1} c_{p_1 p_2}\delta(E_{k_1k_2}-E_{p_1p_2}).
\label{finsmat}
\end{aligned}
\end{equation}

Equation (\ref{finsmat}) shows that, in general, the particles can exchange momentum as a result of scattering, for any finite region of interaction. We show in the next section that, in the limit $L\rightarrow\infty$, that is no longer the case and the momentum of each particle is individually conserved\footnote{For distinguishable particles, they actually switch their momenta, but they still cannot trade an arbitrary amount of momentum.}.

\subsection{Asymptotic behaviour for \texorpdfstring{$L\rightarrow\infty$}{}} \label{subsec:asymp}

So far, our results are exact for the Bose-Hubbard interaction restricted to a finite region. Let us now explore the limit $L\rightarrow \infty$. We begin by using the known fact \cite{toep-circ} that Toeplitz matrices can be asymptotically approximated by \textit{circulant} matrices. One possible circulant approximation is the matrix
\begin{equation}
\begin{aligned}
C_N=\begin{bmatrix}
c(0) & c(1) & c(2) & \cdots & c(N-1)\\
c(N-1) & c(0) & c(1) & \cdots & c(N-2)\\
c(N-2) & c(N-1) & c(0) & \cdots & c(N-2)\\
\vdots & \vdots & \vdots & \ddots & \vdots \\
c(1) & c(2) & c(3) & \cdots & c(0)
\end{bmatrix},
\end{aligned}
\end{equation}
with coefficients given by
\begin{equation}
\begin{aligned}
c(n)=\begin{cases}t(0)\, &n=0,\\
t(n)+t(n-N)\, &n=1,\cdots,N-1.\end{cases}
\end{aligned}
\end{equation}
Every circulant matrix is diagonalized by the unitary Fourier matrix \cite{circ}, given by $F_{lm}=\frac{1}{\sqrt{N}}w^{-lm}$, where $w=e^{\frac{2\pi i}{N}}$ (in our case, it is convenient to define both indices running from $-L$ to $L$). That is, we can write $C_N=F^\dagger diag(\lambda_N)F$, where the eigenvalues of $C_N$ are
$$(\lambda_N)_m=\sum_{n=0}^{N-1} c(n)w^{nm}.$$
Therefore, we can approximate Eq.\ \eqref{Smatrix2} by a much simpler expression:
\begin{equation}
\begin{aligned}
\bra{k_1 k_2}V\ket{{p_1 p_2}^+} = \frac{b^2 U}{2\pi}c_{k_1 k_2}^\dagger T_N^{-1} c_{p_1 p_2} \approx \frac{b^2 U}{2\pi}c_{k_1 k_2}^\dagger F^\dagger diag(\lambda_N)^{-1}F c_{p_1 p_2} \text{, as } N\rightarrow\infty.
\label{Smatrix3}
\end{aligned}
\end{equation}

Let now us obtain the asymptotic behavior of the $\lambda_l$. Noting that $J(E,-n)=J(E,n)$, we have,
\begin{equation}
\begin{aligned}
(\lambda_N)_l=2\pi-U\sum_{n=-2L}^{2L} J(E,n)w^{nl}.\end{aligned}
\end{equation}
Using the definition in Eq.\ \eqref{J_pm}, switching the sum and the integral and then taking the limit of large $N$, we can write
\begin{equation}
\begin{aligned}
(\lambda_N)_l&=2\pi-\frac{U}{2}\int_{-\pi}^\pi\frac{1}{E-4\cos{\frac{\pi l}{N}}\cos{k_-}+i\epsilon}dk_- -\frac{U}{2}\int_{-\pi}^\pi\frac{1}{E+4\cos{\frac{\pi l}{N}}\cos{k_-}+i\epsilon}dk_-,
\label{lambdadelta}
\end{aligned}
\end{equation}
where we used the fact that $\frac{1}{T}\sum_{n\in\mathbb{Z}}e^{2\pi i \frac{n}{T}}=\delta_T(x)=\sum_{n\in\mathbb{Z}}\delta(x-nT)$ is the Dirac comb, the periodic version of the Dirac delta, and we used the identity $2\delta_{T}(2x)=\delta_{T}(x)+\delta_{T}(x+\frac{T}{2})$, proved in Appendix \ref{apx:B}. The computation of these  integrals is left to Appendix \ref{apx:A}. The result, as $\epsilon\rightarrow 0$, is 
\begin{equation}
\begin{aligned}
(\lambda_N)_l=2\pi- \dfrac{2\pi U}{\sqrt{E^2-16\cos^2{\frac{\pi l}{N}}}},
\end{aligned}
\end{equation}
where the square root is defined to have positive imaginary part if the radicand is negative.

Going back to Eq.\ \eqref{Smatrix3}, we have
\begin{equation}
\begin{aligned}
\bra{k_1 k_2}V\ket{{p_1 p_2}^+}=\frac{b^2 U}{2\pi N}\sum_{l,m,n=-L}^L w^{l(n-m)}e^{-i(k_1+k_2)m}e^{i(p_1+p_2)n}(\lambda_N)_l^{-1}.
\end{aligned}
\end{equation}
Note that the variables $m$ and $n$ are uncoupled, so we can sum over them independently, in the limit of large $L$, to obtain
\begin{equation}
\begin{aligned}
\bra{k_1 k_2}V\ket{{p_1 p_2}^+}=\frac{2\pi b^2 U}{N}\sum_{l=-L}^L \delta_{2\pi}\left(k_1+k_2+\frac{2\pi l}{N}\right)\delta_{2\pi}\left(p_1+p_2+\frac{2\pi l}{N}\right)(\lambda_N)_l^{-1}.
\end{aligned}
\end{equation}
We now replace the sum by an integral with variable $\ell=\frac{2\pi}{N}l$, $d\ell=\frac{2\pi}{N}$, such that
\begin{equation}
\begin{aligned}
\bra{k_1 k_2}V\ket{{p_1 p_2}^+}&=b^2 U\delta_{2\pi}(k_1+k_2-p_1-p_2)\int_{-\pi}^{\pi} \dfrac{\delta_{2\pi}(p_1+p_2+\ell)}{2\pi- \frac{2\pi U}{\sqrt{E^2-16\cos^2{\frac{\ell}{2}}}}}d\ell\\
&=\dfrac{b^2 U}{2\pi-\frac{2\pi U}{\sqrt{E^2-16\cos^2{\frac{p_1+p_2}{2}}}}}\delta_{2\pi}(k_1+k_2-p_1-p_2).
\end{aligned}
\end{equation}
Since $E^2=16\cos^2{p_+}\cos^2{p_-}\leq 16\cos^2{p_+}$, where $p_\pm=\frac{p_1\pm p_2}{2}$, and recalling that this square root was defined to have positive imaginary part, we can write
\begin{equation}
\begin{aligned}
\bra{k_1 k_2}V\ket{{p_1 p_2}^+}=\dfrac{b^2 U}{2\pi+\frac{\pi i U}{|\sin{p_1}-\sin{p_2}|}}\delta_{2\pi}(k_1+k_2-p_1-p_2),
\end{aligned}
\end{equation}
where we have used the identity $\sin{p_1}-\sin{p_2}=2\cos{p_+}\sin{p_-}$.

To compute the S-matrix of Eq.\ \eqref{S}, we can now use the following identity, proven in Appendix \ref{apx:B},
\begin{equation}
\begin{aligned}
\delta(E_{k_1k_2}-E_{p_1p_2})\delta_{2\pi}(k_1+k_2-p_1-p_2)=\frac{1}{2|\sin k_1-\sin k_2|}(\delta_{2\pi}&(k_1 - p_1)\delta_{2\pi}(k_2-p_2)\\+&\delta_{2\pi}(k_1 - p_2)\delta_{2\pi}(k_2-p_1)).
\end{aligned}
\end{equation}
Finally we plug this into Eq.\ \eqref{S} to obtain
\begin{equation}
\begin{aligned}
S_{k_1 k_2; p_1 p_2}&=\delta(k_1-p_1)\delta(k_2-p_2)-2\pi i\delta(E_{k_1k_2}-E_{p_1p_2})\bra{k_1 k_2}V\ket{{p_1 p_2}^+}\\
&=\delta(k_1-p_1)\delta(k_2-p_2)\left(1-\dfrac{ 2i b^2 U}{4|\sin{p_1}-\sin{p_2}|+2 i U}\right)  \\
&\hspace{0.5cm}-\delta(k_1-p_2)\delta(k_2-p_1)\left(\dfrac{ 2i b^2 U}{4|\sin{p_1}-\sin{p_2}|+2 i U}\right).
\label{Sid}
\end{aligned}
\end{equation}
This expression shows us that, in the limit where we recover translation invariance, the particles cannot exchange any amount of momentum; they either conserve both momenta individually, or at most exchange them.

For distinguishable particles we have $b=1$, and the reflection and transmission coefficients are given by, respectively:
\begin{equation}
\begin{aligned}
R(p_1,p_2)&=\dfrac{-iU}{2|\sin{p_1}-\sin{p_2}|+iU},\\
T(p_1,p_2)&=\dfrac{2|\sin{p_1}-\sin{p_2}|}{2|\sin{p_1}-\sin{p_2}|+iU}.
\end{aligned}
\end{equation}
On the other hand, for bosons we have $b=\sqrt{2}$, and since we considered ordered momenta for the bosonic states, namely, $p_1<p_2$ and $k_1<k_2$, we are left with the following S-matrix
\begin{equation}
\begin{aligned}
S_{k_1 k_2; p_1 p_2}=\delta(k_1-p_1)\delta(k_2-p_2)\dfrac{2|\sin{p_1}-\sin{p_2}|- i  U}{2|\sin{p_1}-\sin{p_2}|+ i U}.
\label{ideal}
\end{aligned}
\end{equation}
This expression recovers the two-particle result of \cite{twoparticlehubbard} and in the Appendix B of \cite{multiparticleqw}, but without restricting the particles to be counterpropagating [i.e.\ $p_1\in(-\pi,0)$ and $p_2\in(0,\pi)$]. Convergence of operators in infinite dimensional Hilbert spaces is not straightforward, so the fact that we can recover the correct S-matrices by this limiting process is not entirely obvious and is quite useful, as the limiting problem might be much more difficult to solve.

In the next section we analyse two natural questions about this result; how to measure the convergence and what is the rate of convergence.

\section{Numerical analysis} \label{sec:numerics}

In this section, we numerically analyse the behaviour of the S-matrix for a finite number of interaction sites. We start from Eq.\ \eqref{finsmat}, up until which every computation was exact for finite $L$. {Note that throughout this section we do not use the circulant approximation used previously, in Sec. \ref{subsec:asymp}---the purpose of that approximation was only to recover known results in the $L\rightarrow\infty$ limit.}

We begin by noting that it is not trivial to measure how close the finite-$L$ S-matrix from Eq.\ \eqref{finsmat} is to the limiting case described in Eq.\ \eqref{ideal}. For example, let us consider narrow wavepackets in momentum space; in the limiting case, they scatter off of each other and acquire a phase factor depending on both momenta; on the other hand, for any finite region of interaction, these wavepackets do not acquire a phase after scattering, as we will see shortly. Therefore, there always exists some state for which the S-matrices give consistently different results, which means that the finite-case S-matrix does not converge to the asymptotic S-matrix in operator norm, even though we showed they converge element-wise. {We sidestep this difficulty by taking a more pragmatic approach---we measure the distance between output states generated by the S-matrices when acting on a particular subspace of input states, namely, those that would arise in quantum walk based computation.}

If we analyze carefully the construction of a universal quantum computer proposed in \cite{multiparticleqw}, we see that only three families of states might propagate in a long linear graph, corresponding to zero, one or two particles. Within the dual-rail encoding used in \cite{multiparticleqw}, for the purposes of implementing a CPHASE gate, the presence (absence) of a particle in the linear graph is equivalent to the logical state $\ket{1}$ (resp.\ $\ket{0}$). More concretely: the logical two-qubit state $\ket{00}$ is represented by the absence of any particles. The logical states $\ket{01}$ and $\ket{10}$ are represented by a single particle moving to the left or right, respectively (this choice is arbitrary), and, following \cite{multiparticleqw}, we chose them to be in a state of the form
\begin{equation}
\begin{aligned}
\ket{k^\sigma}:=\dfrac{1}{\sqrt{2\sigma}}\int_{k-\sigma}^{k+\sigma} \ket{k'}dk',
\end{aligned}
\end{equation}
for momentum $k$ and width $2\sigma$. Finally, we choose the logical state $\ket{11}$ as the (normalized) bosonic state
\begin{equation}
\begin{aligned}
\dfrac{\ket{k_1^\sigma}\ket{k_2^\sigma}+\ket{k_2^\sigma}\ket{k_1^\sigma}}{\sqrt{2+2|\braket{k_1^\sigma|k_2^\sigma}|^2}}.
\label{bosonwavepacket}
\end{aligned}
\end{equation}

For concreteness, in our numerical computation we follow the choice made in \cite{multiparticleqw} and fix one particle to have momentum centered around $k_1=\frac{\pi}{4}$, the other one centered around $k_2=-\frac{\pi}{2}$, and $U=2+\sqrt{2}$. Setting these choices in \eqref{ideal}, a scattering eigenstate $\ket{p_1 p_2 ^{\textit{boson}}}$ would accrue a phase of $-i$. This scattering eigenstate is recovered by setting $\sigma \rightarrow 0$ in the equation above, so we expect to recover a perfect phase gate with phase $-i$ in the limit of large $L$ and small $\sigma$.

As our main figure of merit, we compute the average gate fidelity between the finite-case S-matrix and a CPHASE gate with phase $\phi$, as in \cite{brod2016a}:
\begin{equation}
\begin{aligned}
F(\phi):=\int d\psi \bra{\psi}C_\phi^\dagger S_N \ket{\psi}\bra{\psi}S_N^\dagger C_\phi  \ket{\psi},
\label{fidphi}
\end{aligned}
\end{equation}
where $S_N$ is the S-matrix given in Eq.\ \eqref{S} for $N=2L+1$ interaction sites, the integration is carried over the two-qubit state space with the Haar measure, and
\begin{equation}
\begin{aligned}
C_{\phi} = \begin{bmatrix}
1 & 0 & 0 &  0\\
0 & 1 & 0 & 0\\
0 & 0 & 1 & 0\\
0 & 0 &0 & e^{i\phi}
\end{bmatrix}.
\end{aligned}
\end{equation}
Note that we could instead compute the fidelity between the finite-case S-matrix and its asymptotic limit directly. However, this would give us the same qualitative results as just computing the fidelity as described above and setting $\phi=-\frac{\pi}{2}$  (cf.\ Figure \ref{fig:fid}). Therefore, we focus our analysis on the fidelity described in Eq.\ \eqref{fidphi}, as it gives us more flexibility by allowing us to choose different values for the parameter $\phi$.
Finally, using the fact that the S-matrix acts as the identity whenever there are less than two particles present, we can simplify the average gate fidelity as
\begin{equation}
\begin{aligned}
F(\phi)=\dfrac{1}{10}(6+3Re(e^{-i\phi}\mathcal{F})+|\mathcal{F}|^2),
\end{aligned}
\end{equation}{where $\mathcal{F}=\bra{11}S_N\ket{11}$ is the only quantity left to calculate; it can be computed using the expression of the S-matrix \eqref{finsmat} together with the definition of the bosonic wavepacket \eqref{bosonwavepacket}. Note that it is a function of the wavepacket width $\sigma$.}

\begin{figure}[h]
    \centering
    \includegraphics[width=0.6\textwidth]{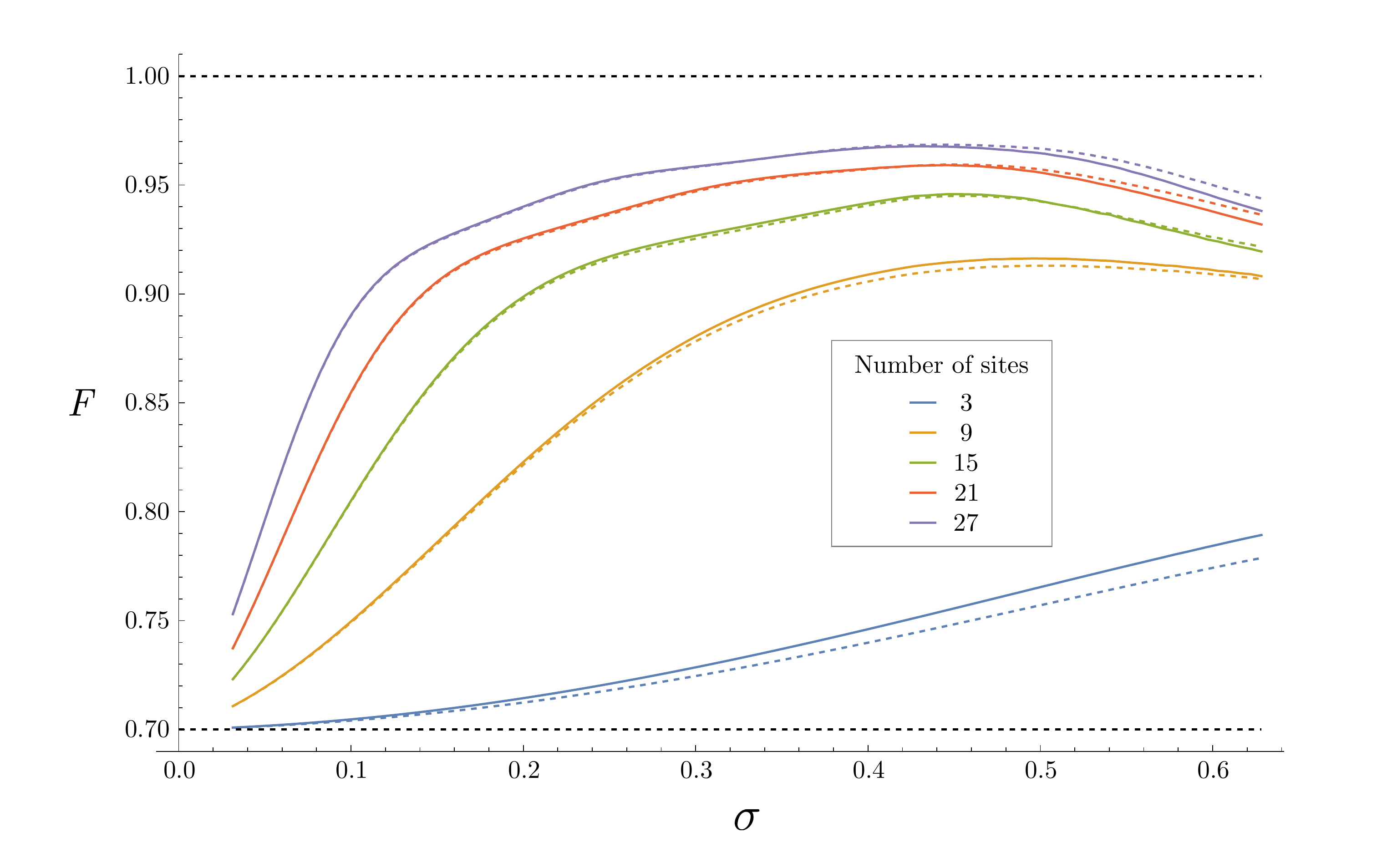}
    \caption{Solid line: Fidelity between $S_N$ and the CPHASE gate with the ideal phase $\phi=-\frac{\pi}{2}$ as a function of the wavepacket half width $\sigma$. Dashed line: Fidelity between $S_N$ and the limit S-matrix as a function of $\sigma$.}
    \label{fig:fid}
\end{figure}

Our first main result can be seen in Figure \ref{fig:fid}. There we observe that the fidelity for the phase $\phi=-\frac{\pi}{2}$, for small $\sigma$, increases for larger values of $L$, which tells us that in the limit $L\rightarrow\infty$, narrow momentum wavepackets scatter with the correct phase. We also verify this in Fig.\ \ref{fig:phase}, where we plot the highest fidelity relative to \emph{some} phase gate (left), with the corresponding phase $\phi$ (right). There, we see that, except for very small $\sigma$, the best phase approaches $\phi=-\frac{\pi}{2}$ for large $L$, as expected.

On the other hand, Fig.\ \ref{fig:fid} also tells us that wavepackets cannot be arbitrarily narrow, as the fidelity to a CPHASE gate with $\phi=-\frac{\pi}{2}$ tends to $7/10$ as $\sigma\rightarrow 0$. This can be shown analytically, since if we take the limit $\sigma\rightarrow 0$ for any finite $L$, the S-matrix acts as the identity on these wavepackets. We can interpret physically this as follows: since the interaction is proportional to the amplitude of both particles being at the same site, which vanishes for $\sigma\rightarrow 0$, any finite number of vertices of interaction is not enough to produce an effect asymptotically. This analysis is also corroborated by both graphs of Fig.\ \ref{fig:phase} which show that, indeed, whenever $\sigma$ is too small the closest phase gate has $\phi\rightarrow 0$, with fidelity close to one. 

\begin{figure}[h]
    \centering
    \includegraphics[width=\textwidth]{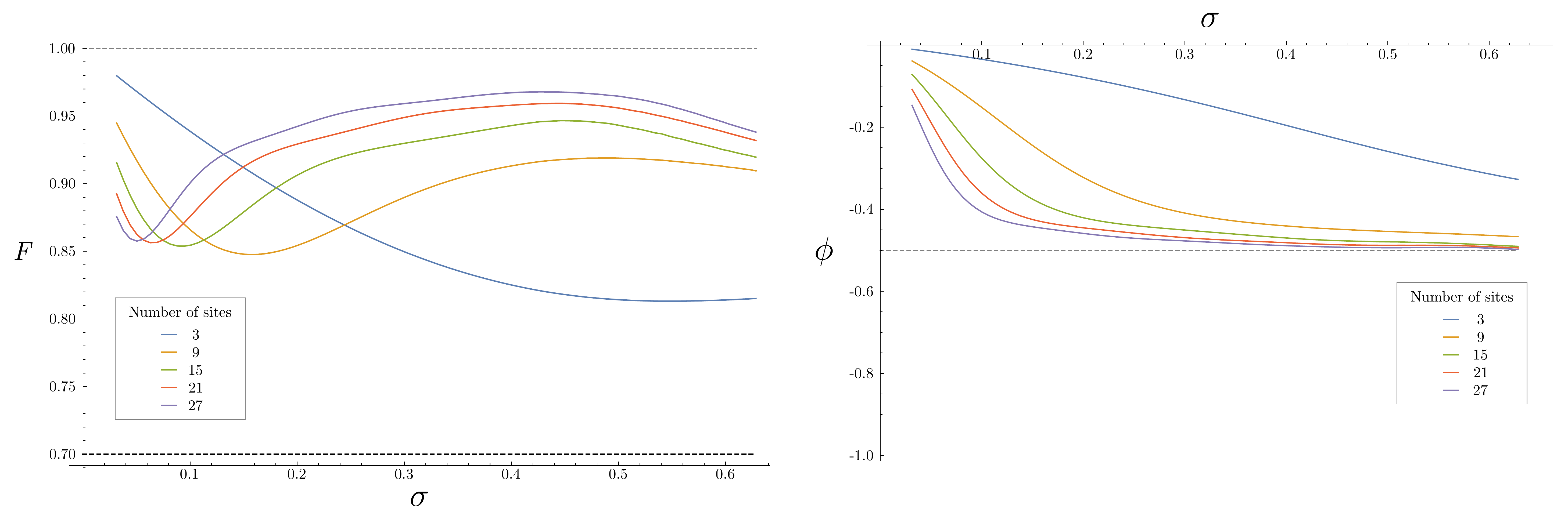}
    \caption{Left: Maximized fidelity over all phases as a function of $\sigma$. Right: Corresponding phase $\phi$ for which the fidelity is maximized.}
    \label{fig:phase}
\end{figure}

Two other important features stand out in the graphs in Fig. \ref{fig:fid} and Fig. \ref{fig:phase}. First, all of them display a plateau of behaviour above a certain wavepacket width, which suggests that a gate implemented this way would be robust to variations of the wavepacket preparation (contrast this with the results of \cite{brod2016a} for scattering of two \emph{photons} on a line of interaction sites, where the maximum only occurs for a single well-calibrated value of $\sigma$). Second, very high fidelities (over 95\%) are already achieved for relatively small number of interaction sites (under 30). This suggests that these two-qubit gates might be feasible with a modest amount of resources in a real-world implementation of this computational model. Furthermore, since the fidelity achieved is close to unity, it means that the scattering does not change appreciably the shape of the wavepackets, which is surprising given that the particles have many degrees of freedom that can become entangled.

Finally, Fig.\ \ref{fig:fid} suggests that the maximal fidelity is approaching $1$ as a function of $L$. In Fig.\ \ref{fig:cofid} we plot the infidelity, $1-F(-\frac{\pi}{2})$ (minimized over choice of $\sigma$) as a function of $L$ to verify that fact and extract a rate of convergence. Though we have few data points to extrapolate the results, we can infer that the fidelity is clearly converging to unity relatively fast.

\begin{figure}[h]
    \centering
    \includegraphics[width=0.7\textwidth]{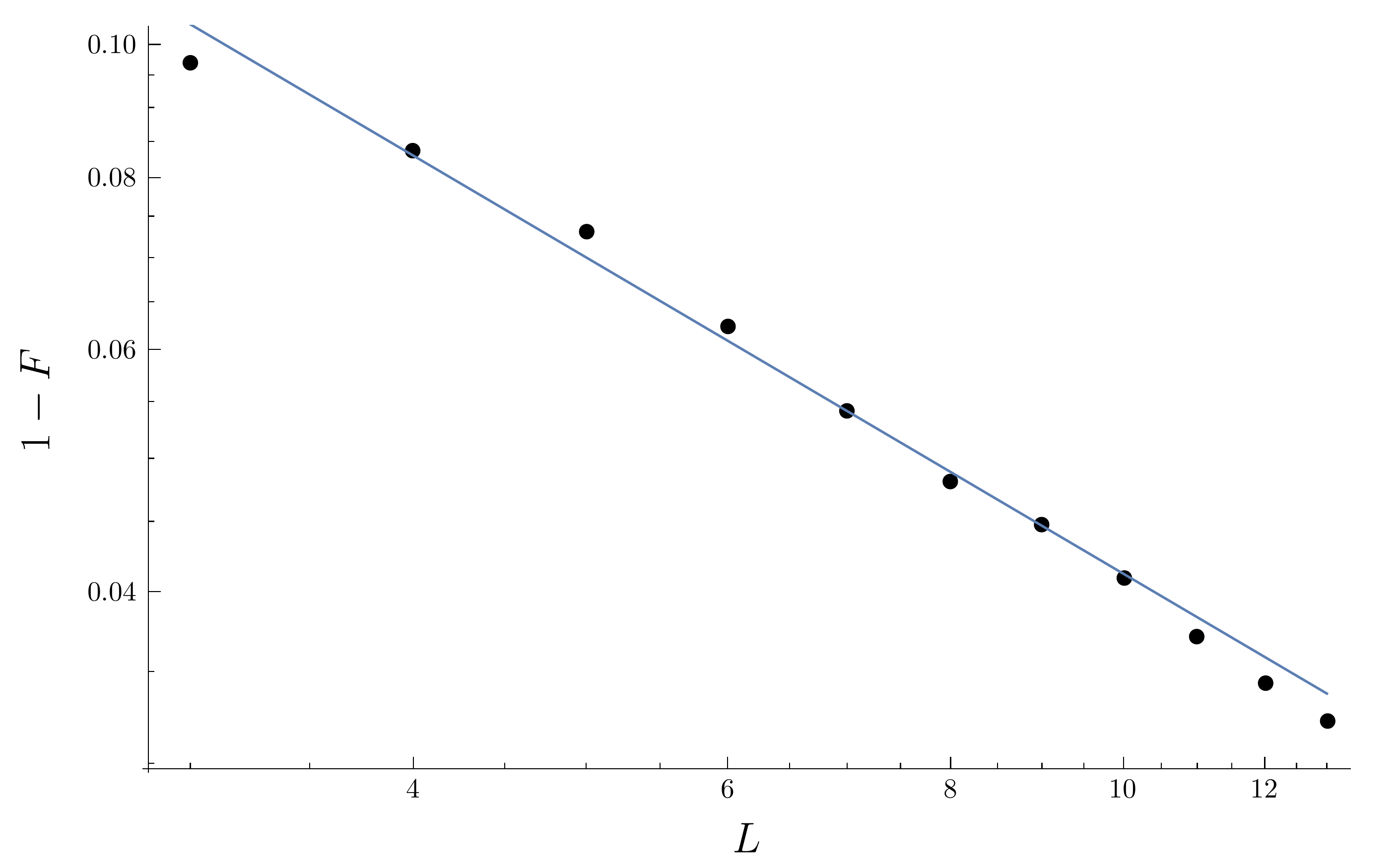}
    \caption{Loglog plot of infidelity as a funcion of $L$. The fitted curve corresponds to the power law $1-F = 0.24 L^{-0.76}$.}
    \label{fig:cofid}
\end{figure}

\section{Conclusion}

We have fully solved the scattering problem for two particles propagating on an infinite line interacting via the Bose-Hubbard Hamiltonian on a finite region of contiguous vertices. We analyzed the scattering in the asymptotic limit of infinite interaction vertices both analytically and numerically. The S-matrix we obtained reduces to previously-known results \cite{twoparticlehubbard,multiparticleqw} in the asymptotic limit (though the full S-matrix is not described explicitly in those papers, it can be inferred). By numerically analyzing the scattering in the finite region regime, we observed that a two-qubit gate with high fidelity can be already be achieved for very few interaction sites, and with wavepackets that may be easier to prepare than approximations to momentum eigenstates (e.g.\ with the aid of momentum filters \cite{singleparticleqw}).

Our work showcases how the Lippmann-Schwinger formalism can be used to solve the more general problem of multiparticle scattering on graphs, paving the way for a variety of future research directions. The more obvious next steps are to develop a full framework to treat multiparticle scattering on graphs with semi-infinite paths attached, such as those considered in \cite{multiparticleqw}. This might be used, for example, to develop new graphs that implement multi-qubit gates directly, or to simplify gadgets such as the momentum switches \cite{momswi}. A primitive in this direction would be to find nontrivial graphs with perfect transmission from input to output paths for two or more interacting particles.

Another goal is to filter out which ingredients make multiparticle scattering universal for quantum computing, developing new tools to study the computational complexity of quantum walks. For example, it is interesting to consider whether it is possible to perform universal computation where all particles have the same momenta (dispensing with the use of momentum switches), and for which values of momentum this holds. Conversely, one might ask whether there are values of momenta for which multiparticle scattering is classically simulable or otherwise nonuniversal \cite{bossamp}. 

Finally, we envision applications of these results beyond the field quantum computation. Scattering theory is a successful formalism used in other fields such as quantum optics, quantum field theory, condensed matter, and so on, and its discrete-space analogue is not so well-understood. In that sense, our result is analogous to that of \cite{brod2016a,brod2016b}, and it is an interesting question to try to extend this analogy further.

\begin{acknowledgments}
This work was supported by Brazilian funding agencies CNPq and FAPERJ.
\end{acknowledgments}
\bibliographystyle{quantum}
\bibliography{Scatteringrefs}

\begin{thebibliography}{19}%
\makeatletter
\providecommand \@ifxundefined [1]{%
 \@ifx{#1\undefined}
}%
\providecommand \@ifnum [1]{%
 \ifnum #1\expandafter \@firstoftwo
 \else \expandafter \@secondoftwo
 \fi
}%
\providecommand \@ifx [1]{%
 \ifx #1\expandafter \@firstoftwo
 \else \expandafter \@secondoftwo
 \fi
}%
\providecommand \natexlab [1]{#1}%
\providecommand \enquote  [1]{``#1''}%
\providecommand \bibnamefont  [1]{#1}%
\providecommand \bibfnamefont [1]{#1}%
\providecommand \citenamefont [1]{#1}%
\providecommand \href@noop [0]{\@secondoftwo}%
\providecommand \href [0]{\begingroup \@sanitize@url \@href}%
\providecommand \@href[1]{\@@startlink{#1}\@@href}%
\providecommand \@@href[1]{\endgroup#1\@@endlink}%
\providecommand \@sanitize@url [0]{\catcode `\\12\catcode `\$12\catcode
  `\&12\catcode `\#12\catcode `\^12\catcode `\_12\catcode `\%12\relax}%
\providecommand \@@startlink[1]{}%
\providecommand \@@endlink[0]{}%
\providecommand \url  [0]{\begingroup\@sanitize@url \@url }%
\providecommand \@url [1]{\endgroup\@href {#1}{\urlprefix }}%
\providecommand \urlprefix  [0]{URL }%
\providecommand \Eprint [0]{\href }%
\providecommand \doibase [0]{http://dx.doi.org/}%
\providecommand \selectlanguage [0]{\@gobble}%
\providecommand \bibinfo  [0]{\@secondoftwo}%
\providecommand \bibfield  [0]{\@secondoftwo}%
\providecommand \translation [1]{[#1]}%
\providecommand \BibitemOpen [0]{}%
\providecommand \bibitemStop [0]{}%
\providecommand \bibitemNoStop [0]{.\EOS\space}%
\providecommand \EOS [0]{\spacefactor3000\relax}%
\providecommand \BibitemShut  [1]{\csname bibitem#1\endcsname}%
\let\auto@bib@innerbib\@empty
\bibitem [{\citenamefont {Ambainis}\ \emph {et~al.}(2001)\citenamefont
  {Ambainis}, \citenamefont {Bach}, \citenamefont {Nayak}, \citenamefont
  {Vishwanath},\ and\ \citenamefont {Watrous}}]{dtqw1}%
  \BibitemOpen
  \bibfield  {author} {\bibinfo {author} {\bibfnamefont {A.}~\bibnamefont
  {Ambainis}}, \bibinfo {author} {\bibfnamefont {E.}~\bibnamefont {Bach}},
  \bibinfo {author} {\bibfnamefont {A.}~\bibnamefont {Nayak}}, \bibinfo
  {author} {\bibfnamefont {A.}~\bibnamefont {Vishwanath}}, \ and\ \bibinfo
  {author} {\bibfnamefont {J.}~\bibnamefont {Watrous}},\ }in\ \href {\doibase
  10.1145/380752.380757} {\emph {\bibinfo {booktitle} {Proceedings of the
  Thirty-Third Annual ACM Symposium on Theory of Computing}}},\ \bibinfo
  {series and number} {STOC '01}\ (\bibinfo  {publisher} {ACM},\ \bibinfo
  {address} {New York},\ \bibinfo {year} {2001})\ pp.\ \bibinfo {pages}
  {37--49}\BibitemShut {NoStop}%
\bibitem [{\citenamefont {Nayak}\ and\ \citenamefont
  {Vishwanath}(2000)}]{dtqw2}%
  \BibitemOpen
  \bibfield  {author} {\bibinfo {author} {\bibfnamefont {A.}~\bibnamefont
  {Nayak}}\ and\ \bibinfo {author} {\bibfnamefont {A.}~\bibnamefont
  {Vishwanath}},\ }\href {https://arxiv.org/abs/quant-ph/0010117} {\bibfield
  {journal} {\bibinfo  {journal} {arXiv:quant-ph/0010117}\ } (\bibinfo {year}
  {2000})}\BibitemShut {NoStop}%
\bibitem [{\citenamefont {Childs}\ \emph {et~al.}(2002)\citenamefont {Childs},
  \citenamefont {Farhi},\ and\ \citenamefont {Gutmann}}]{qcdectrees2}%
  \BibitemOpen
  \bibfield  {author} {\bibinfo {author} {\bibfnamefont {A.}~\bibnamefont
  {Childs}}, \bibinfo {author} {\bibfnamefont {E.}~\bibnamefont {Farhi}}, \
  and\ \bibinfo {author} {\bibfnamefont {S.}~\bibnamefont {Gutmann}},\ }\href
  {\doibase 10.1023/A:1019609420309} {\bibfield  {journal} {\bibinfo  {journal}
  {Quantum Information Processing}\ }\textbf {\bibinfo {volume} {1}},\ \bibinfo
  {pages} {35} (\bibinfo {year} {2002})}\BibitemShut {NoStop}%
\bibitem [{\citenamefont {Farhi}\ and\ \citenamefont
  {Gutmann}(1998)}]{qcdectrees1}%
  \BibitemOpen
  \bibfield  {author} {\bibinfo {author} {\bibfnamefont {E.}~\bibnamefont
  {Farhi}}\ and\ \bibinfo {author} {\bibfnamefont {S.}~\bibnamefont
  {Gutmann}},\ }\href {\doibase 10.1103/PhysRevA.58.915} {\bibfield  {journal}
  {\bibinfo  {journal} {Phys. Rev. A}\ }\textbf {\bibinfo {volume} {58}},\
  \bibinfo {pages} {915} (\bibinfo {year} {1998})}\BibitemShut {NoStop}%
\bibitem [{\citenamefont {Childs}\ \emph {et~al.}(2003)\citenamefont {Childs},
  \citenamefont {Cleve}, \citenamefont {Deotto}, \citenamefont {Farhi},
  \citenamefont {Gutmann},\ and\ \citenamefont {Spielman}}]{qcdectrees3}%
  \BibitemOpen
  \bibfield  {author} {\bibinfo {author} {\bibfnamefont {A.~M.}\ \bibnamefont
  {Childs}}, \bibinfo {author} {\bibfnamefont {R.}~\bibnamefont {Cleve}},
  \bibinfo {author} {\bibfnamefont {E.}~\bibnamefont {Deotto}}, \bibinfo
  {author} {\bibfnamefont {E.}~\bibnamefont {Farhi}}, \bibinfo {author}
  {\bibfnamefont {S.}~\bibnamefont {Gutmann}}, \ and\ \bibinfo {author}
  {\bibfnamefont {D.~A.}\ \bibnamefont {Spielman}},\ }in\ \href {\doibase
  10.1145/780542.780552} {\emph {\bibinfo {booktitle} {Proceedings of the
  Thirty-Fifth Annual ACM Symposium on Theory of Computing}}},\ \bibinfo
  {series and number} {STOC '03}\ (\bibinfo  {publisher} {ACM},\ \bibinfo
  {address} {New York},\ \bibinfo {year} {2003})\ pp.\ \bibinfo {pages}
  {59--68}\BibitemShut {NoStop}%
\bibitem [{\citenamefont {Childs}(2009)}]{singleparticleqw}%
  \BibitemOpen
  \bibfield  {author} {\bibinfo {author} {\bibfnamefont {A.~M.}\ \bibnamefont
  {Childs}},\ }\href {\doibase 10.1103/PhysRevLett.102.180501} {\bibfield
  {journal} {\bibinfo  {journal} {Phys. Rev. Lett.}\ }\textbf {\bibinfo
  {volume} {102}},\ \bibinfo {pages} {180501} (\bibinfo {year}
  {2009})}\BibitemShut {NoStop}%
\bibitem [{\citenamefont {Childs}\ \emph {et~al.}(2013)\citenamefont {Childs},
  \citenamefont {Gosset},\ and\ \citenamefont {Webb}}]{multiparticleqw}%
  \BibitemOpen
  \bibfield  {author} {\bibinfo {author} {\bibfnamefont {A.~M.}\ \bibnamefont
  {Childs}}, \bibinfo {author} {\bibfnamefont {D.}~\bibnamefont {Gosset}}, \
  and\ \bibinfo {author} {\bibfnamefont {Z.}~\bibnamefont {Webb}},\ }\href
  {\doibase 10.1126/science.1229957} {\bibfield  {journal} {\bibinfo  {journal}
  {Science}\ }\textbf {\bibinfo {volume} {339}},\ \bibinfo {pages} {791}
  (\bibinfo {year} {2013})}\BibitemShut {NoStop}%
\bibitem [{\citenamefont {Valiente}\ and\ \citenamefont
  {Petrosyan}(2008)}]{twoparticlehubbard}%
  \BibitemOpen
  \bibfield  {author} {\bibinfo {author} {\bibfnamefont {M.}~\bibnamefont
  {Valiente}}\ and\ \bibinfo {author} {\bibfnamefont {D.}~\bibnamefont
  {Petrosyan}},\ }\href {\doibase 10.1088/0953-4075/41/16/161002} {\bibfield
  {journal} {\bibinfo  {journal} {J.\ Phys.\ B: At.\ Mol.\ Opt.\ Phys.}\
  }\textbf {\bibinfo {volume} {41}},\ \bibinfo {pages} {161002} (\bibinfo
  {year} {2008})}\BibitemShut {NoStop}%
\bibitem [{\citenamefont {Sakurai}(1994)}]{sakurai}%
  \BibitemOpen
  \bibfield  {author} {\bibinfo {author} {\bibfnamefont {J.~J.}\ \bibnamefont
  {Sakurai}},\ }\href@noop {} {\emph {\bibinfo {title} {{Modern quantum
  mechanics}}}}\ (\bibinfo  {publisher} {Addison-Wesley},\ \bibinfo {address}
  {Reading, MA},\ \bibinfo {year} {1994})\BibitemShut {NoStop}%
\bibitem [{\citenamefont {Childs}\ and\ \citenamefont
  {Gosset}(2012)}]{levinsonthm2}%
  \BibitemOpen
  \bibfield  {author} {\bibinfo {author} {\bibfnamefont {A.~M.}\ \bibnamefont
  {Childs}}\ and\ \bibinfo {author} {\bibfnamefont {D.}~\bibnamefont
  {Gosset}},\ }\href {\doibase 10.1063/1.4757665} {\bibfield  {journal}
  {\bibinfo  {journal} {Journal of Mathematical Physics}\ }\textbf {\bibinfo
  {volume} {53}},\ \bibinfo {pages} {102207} (\bibinfo {year}
  {2012})}\BibitemShut {NoStop}%
\bibitem [{\citenamefont {Varbanov}\ and\ \citenamefont
  {Brun}(2009)}]{varbanov}%
  \BibitemOpen
  \bibfield  {author} {\bibinfo {author} {\bibfnamefont {M.}~\bibnamefont
  {Varbanov}}\ and\ \bibinfo {author} {\bibfnamefont {T.~A.}\ \bibnamefont
  {Brun}},\ }\href {\doibase 10.1103/PhysRevA.80.052330} {\bibfield  {journal}
  {\bibinfo  {journal} {Phys. Rev. A}\ }\textbf {\bibinfo {volume} {80}},\
  \bibinfo {pages} {052330} (\bibinfo {year} {2009})}\BibitemShut {NoStop}%
\bibitem [{\citenamefont {Weinberg}(1995)}]{weinberg}%
  \BibitemOpen
  \bibfield  {author} {\bibinfo {author} {\bibfnamefont {S.}~\bibnamefont
  {Weinberg}},\ }\href@noop {} {\emph {\bibinfo {title} {{The Quantum Theory of
  Fields, Volume I Foundations}}}}\ (\bibinfo  {publisher} {Cambridge
  University Press},\ \bibinfo {year} {1995})\BibitemShut {NoStop}%
\bibitem [{\citenamefont {Zhu}\ and\ \citenamefont {Wakin}(2016)}]{toep-circ}%
  \BibitemOpen
  \bibfield  {author} {\bibinfo {author} {\bibfnamefont {Z.}~\bibnamefont
  {Zhu}}\ and\ \bibinfo {author} {\bibfnamefont {M.~B.}\ \bibnamefont
  {Wakin}},\ }\href {https://arxiv.org/abs/1608.04820} {\bibfield  {journal}
  {\bibinfo  {journal} {arXiv:1608.04820 [cs.IT]}\ } (\bibinfo {year}
  {2016})}\BibitemShut {NoStop}%
\bibitem [{\citenamefont {Gray}(2006)}]{circ}%
  \BibitemOpen
  \bibfield  {author} {\bibinfo {author} {\bibfnamefont {R.~M.}\ \bibnamefont
  {Gray}},\ }\href@noop {} {\emph {\bibinfo {title} {{Toeplitz and Circulant
  Matrices: A review}}}}\ (\bibinfo  {publisher} {Foundations and Trends in
  Communications and Information Theory, Vol 2, Issue 3, pp 155-239},\ \bibinfo
  {year} {2006})\BibitemShut {NoStop}%
\bibitem [{\citenamefont {Brod}\ and\ \citenamefont
  {Combes}(2016)}]{brod2016a}%
  \BibitemOpen
  \bibfield  {author} {\bibinfo {author} {\bibfnamefont {D.~J.}\ \bibnamefont
  {Brod}}\ and\ \bibinfo {author} {\bibfnamefont {J.}~\bibnamefont {Combes}},\
  }\href {\doibase 10.1103/PhysRevLett.117.080502} {\bibfield  {journal}
  {\bibinfo  {journal} {Phys. Rev. Lett.}\ }\textbf {\bibinfo {volume} {117}},\
  \bibinfo {pages} {080502} (\bibinfo {year} {2016})}\BibitemShut {NoStop}%
\bibitem [{\citenamefont {Childs}\ \emph {et~al.}(2014)\citenamefont {Childs},
  \citenamefont {Gosset}, \citenamefont {Nagaj}, \citenamefont {Raha},\ and\
  \citenamefont {Webb}}]{momswi}%
  \BibitemOpen
  \bibfield  {author} {\bibinfo {author} {\bibfnamefont {A.}~\bibnamefont
  {Childs}}, \bibinfo {author} {\bibfnamefont {D.}~\bibnamefont {Gosset}},
  \bibinfo {author} {\bibfnamefont {D.}~\bibnamefont {Nagaj}}, \bibinfo
  {author} {\bibfnamefont {M.}~\bibnamefont {Raha}}, \ and\ \bibinfo {author}
  {\bibfnamefont {Z.}~\bibnamefont {Webb}},\ }\href {\doibase
  10.26421/QIC15.7-8-5} {\bibfield  {journal} {\bibinfo  {journal} {Quantum
  Information and Computation}\ }\textbf {\bibinfo {volume} {15}} (\bibinfo
  {year} {2014}),\ 10.26421/QIC15.7-8-5}\BibitemShut {NoStop}%
\bibitem [{\citenamefont {Aaronson}\ and\ \citenamefont
  {Arkhipov}(2011)}]{bossamp}%
  \BibitemOpen
  \bibfield  {author} {\bibinfo {author} {\bibfnamefont {S.}~\bibnamefont
  {Aaronson}}\ and\ \bibinfo {author} {\bibfnamefont {A.}~\bibnamefont
  {Arkhipov}},\ }in\ \href {\doibase 10.1145/1993636.1993682} {\emph {\bibinfo
  {booktitle} {Proceedings of the Forty-Third Annual ACM Symposium on Theory of
  Computing}}},\ \bibinfo {series and number} {STOC '11}\ (\bibinfo
  {publisher} {Association for Computing Machinery},\ \bibinfo {address} {New
  York, NY, USA},\ \bibinfo {year} {2011})\ pp.\ \bibinfo {pages}
  {333--342}\BibitemShut {NoStop}%
\bibitem [{\citenamefont {Brod}\ \emph {et~al.}(2016)\citenamefont {Brod},
  \citenamefont {Combes},\ and\ \citenamefont {Gea-Banacloche}}]{brod2016b}%
  \BibitemOpen
  \bibfield  {author} {\bibinfo {author} {\bibfnamefont {D.~J.}\ \bibnamefont
  {Brod}}, \bibinfo {author} {\bibfnamefont {J.}~\bibnamefont {Combes}}, \ and\
  \bibinfo {author} {\bibfnamefont {J.}~\bibnamefont {Gea-Banacloche}},\ }\href
  {\doibase 10.1103/PhysRevA.94.023833} {\bibfield  {journal} {\bibinfo
  {journal} {Phys. Rev. A}\ }\textbf {\bibinfo {volume} {94}},\ \bibinfo
  {pages} {023833} (\bibinfo {year} {2016})}\BibitemShut {NoStop}%
\bibitem [{\citenamefont {Byrd}\ and\ \citenamefont {Friedman}(1971)}]{byrd}%
  \BibitemOpen
  \bibfield  {author} {\bibinfo {author} {\bibfnamefont {P.~F.}\ \bibnamefont
  {Byrd}}\ and\ \bibinfo {author} {\bibfnamefont {M.~D.}\ \bibnamefont
  {Friedman}},\ }\href@noop {} {\emph {\bibinfo {title} {{Handbook of Elliptic
  Integrals for Engineers and Scientists}}}}\ (\bibinfo  {publisher} {Springer
  Berlin, Heidelberg},\ \bibinfo {year} {1971})\BibitemShut {NoStop}%
\end{thebibliography}%

\appendix
\section{Computing \texorpdfstring{$J(E,n)$}{}} \label{apx:A}

Let us consider the family of integrals 
\begin{equation}
\begin{aligned}
J(E,n)=\dfrac{1}{2\pi}\lim_{\epsilon\rightarrow 0}\int_{-\pi}^\pi\int_{-\pi}^\pi\frac{e^{-i(k_1+k_2)n}}{E-2\cos{k_1}-2\cos{k_2}+ i\epsilon}dk_1 dk_2.
\label{J_pm2}
\end{aligned}
\end{equation}
It is easy to see that $J(E,n)=J(E,-n)=-\overline{J(-E,n)}$, so we only need to compute the integral for $n\ge 0$ and $E\ge 0$. Let us assume $E\ge 0$ in the following steps. Changing to the variables $k_\pm=\frac{k_1\pm k_2}{2}$, we have
\begin{equation}
\begin{aligned}
J(E,n)=\dfrac{1}{2\pi}\lim_{\epsilon\rightarrow 0}\int_{-\pi}^\pi e^{-2ik_+n} \int_{-\pi}^\pi\frac{1}{E-4\cos{k_+}\cos{k_-}+ i\epsilon}dk_- dk_+.
\label{J}
\end{aligned}
\end{equation}

Introducing a complex variable $z=e^{ik_-}$, the inner integral is written as
\begin{equation}
\begin{aligned}
I=i\oint \frac{1}{2\cos(k_+)z^2-(E+ i\epsilon)z+2\cos{k_+}}dz,
\label{I}
\end{aligned}
\end{equation}
where the contour is the unit circle in the positive orientation. The poles of the integrand are the roots of the denominator, $z_\pm=\frac{E+ i\epsilon}{4\cos{k_+}}\pm \sqrt{\left(\frac{E+i\epsilon}{4\cos{k_+}}\right)^2-1}$, where the square root is chosen to have positive real part. Let us assume for the moment that $\cos{k_+}>0$.

If $4\cos{k_+}<E$, then the roots are close to real and as $\epsilon\rightarrow 0$ we have $z_-<1<z_+$ as we can see in figure \ref{fig:roots}. So for sufficiently small $\epsilon$, the only root inside the unit circle is $z_-$.

If $4\cos{k_+}>E$, we can write the roots as $z_\pm=\frac{E+ i\epsilon}{4\cos{k_+}}\pm i\sqrt{1-\left(\frac{E+i\epsilon}{4\cos{k_+}}\right)^2}$, where the square root is chosen to have positive real part, similar to before. then the roots are not on the real line, and approach the unit circle as $\epsilon\rightarrow 0$, as pictured in figure \ref{fig:roots}. But for $\epsilon \neq 0$ and $|a|<1$, we have
\begin{equation}
\begin{aligned}
\left|a+i\epsilon \pm i\sqrt{1-(a+i\epsilon)^2}\right|^2=1\pm\frac{2\epsilon}{\sqrt{1-a^2}}+O(\epsilon^2),
\label{epsapprox}
\end{aligned}
\end{equation}
so, in both cases, the only root inside the contour is $z_-$. For the case $\cos{k_+}<0$, the roots $z_\pm$ switch roles, specifically, only the root $z_+$ is inside the contour.

\begin{figure}[h]
    \centering
    \includegraphics[width=\textwidth]{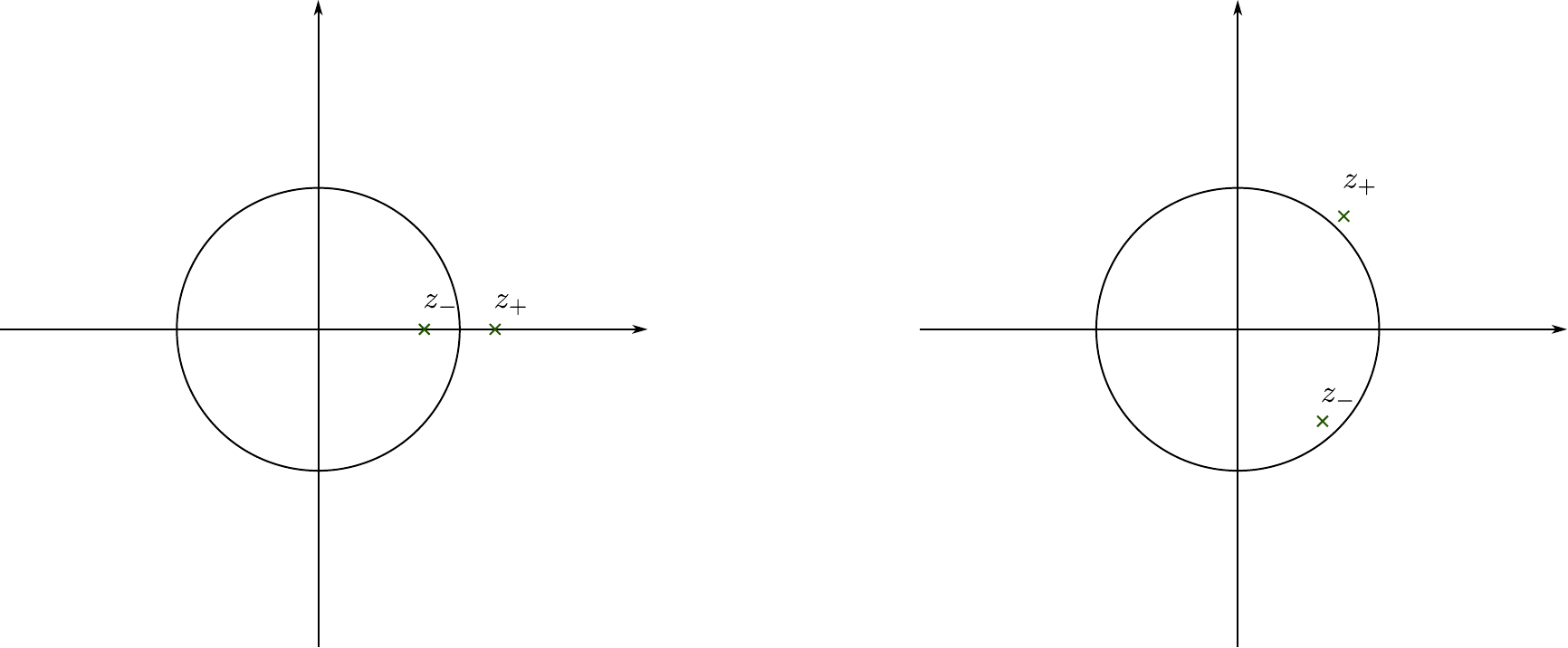}
    \caption{Left: Real roots $z_\pm$ for the case $0<4\cos{k_+}<E$ as $\epsilon\rightarrow 0$. Right: Complex roots $z_\pm$ for the case $4\cos{k_+}>E$.}
    \label{fig:roots}
\end{figure}

By residues, we have
\begin{equation}
\begin{aligned}
I&=2\pi i \begin{cases}\mathop{\mathrm{Res}}_{z = z_-}  \dfrac{i}{2\cos{k_+}(z-z_-)(z-z_+)}\, &\text{if} \, \cos{k_+}>0\\
\mathop{\mathrm{Res}}_{z = z_+}  \dfrac{i}{2\cos{k_+}(z-z_-)(z-z_+)}\, &\text{if} \, \cos{k_+}<0\end{cases}\\
&= \dfrac{2\pi}{\sqrt{\left(E+i\epsilon\right)^2-16\cos^2{k_+}}},
\end{aligned}
\end{equation}
where the square root is chosen as before; in the limit $\epsilon\rightarrow 0$, the square root is $\sqrt{E^2-16\cos^2{k_+}}$ for $4|\cos{k_+}|<E$ and $i\sqrt{16\cos^2{k_+}-E^2}$ for $4|\cos{k_+}|>E$. Now we can take the $\epsilon\rightarrow 0$ limit and we have
\begin{equation}
\begin{aligned}  
J(E,n)=\int_{0}^{\frac{\pi}{2}}\frac{\cos{2np}}{\sqrt{\left(\frac{E}{4}\right)^2-\cos^2{p}}}dp.
\end{aligned}    
\end{equation}

By decomposing $\cos{2np}=T_{2n}(\cos{p})$, where $T_m$ is the $m$-th Chebyshev polynomial of the first kind, we only need to compute the integrals
\begin{equation}
\begin{aligned}
\int_{0}^{\frac{\pi}{2}}\frac{\cos^{2n}{p}}{\sqrt{\left(\frac{E}{4}\right)^2-\cos^2{p}}}dp=\frac{1}{i\sqrt{1-\left(\frac{E}{4}\right)^2}}\left(\int_{0}^{\frac{\pi}{2}}\frac{\cos^{2n}{p}}{\sqrt{1-k^2\sin^2{p}}}dp\right)^*=:\frac{1}{i\sqrt{1-\left(\frac{E}{4}\right)^2}}(C_{n})^*,      
\end{aligned}
\end{equation}
where $k=\sqrt{\frac{16}{16-E^2}}$ is called the \textit{elliptic modulus}, and the square root of a negative number is taken to have positive imaginary part at all steps. The integrals $C_{n}$ are related to the complete elliptic integrals of the first and the second kind, $K(k)$ and $E(k)$ respectively. They are defined as
\begin{equation}
\begin{aligned}
K(k)&=\int_0^{\frac{\pi}{2}}\frac{dp}{\sqrt{1-k^2\sin^2{p}}},\\
E(k)&=\int_0^{\frac{\pi}{2}}\sqrt{1-k^2\sin^2{p}}\,dp.
\end{aligned}
\end{equation}

The sequence of integrals $C_n$ can be computed with the following recurrence relation (see equation 312.05 of \cite{byrd}):
\begin{equation}
\begin{aligned}
C_0&=K(k),\\
C_1&=\frac{1}{k^2}\left(E(k)-(1-k^2)K(k)\right),\\
C_n&= \frac{(2n-2)(2k^2-1)C_{n-1}-(2n-3)(1-k^2)C_{n-2}}{(2n-1)k^2}\qquad\forall n\ge 2.
\end{aligned}
\end{equation}
Finally, the integral $J(E,n)$ for $E\ge 0$ can be computed as
\begin{equation}
\begin{aligned}
J(E,n)=\frac{4^n n}{i\sqrt{1-\left(\frac{E}{4}\right)^2}}\sum_{m=0}^n\frac{(-1)^m(2n-m-1)!}{4^m m! (2n-2m)!}(C_{n-m})^*.    
\end{aligned}    
\end{equation}

\section{Some Dirac comb identities}\label{apx:B}
The Dirac comb is just a periodic extension of the Dirac delta, and one possible representation is given by its Fourier series:
\begin{equation}
\begin{aligned}
\delta_{T}(x)=\frac{1}{T}\sum_{n\in\mathbb{Z}}e^{i 2\pi n\frac{x}{T}},
\label{1id}
\end{aligned}
\end{equation}
where $T$ is the period of the Dirac comb.

One intuitive identity we used (in \eqref{lambdadelta}) is the scaling of the Dirac comb by a factor of two, $2\delta_{T}(2x)=\delta_{\frac{T}{2}}(x)=\delta_{T}(x)+\delta_{T}(x+\frac{T}{2})$, similar to the related scaling identity for the Dirac delta, but also changing the frequency. This can be easily proved using the Fourier series representation:
\begin{equation}
\begin{aligned}
\delta_{T}(x)+\delta_{T}(x+\frac{T}{2})&=\frac{1}{T}\sum_{n\in\mathbb{Z}}e^{i 2\pi n\frac{x}{T}} +\frac{1}{T}\sum_{n\in\mathbb{Z}}e^{i 2\pi n\frac{x+\frac{T}{2}}{T}} \\
&=\frac{1}{T}\sum_{n\in\mathbb{Z}}e^{i 2\pi n\frac{x}{T}} +\frac{1}{T}\sum_{n\in\mathbb{Z}}(-1)^n e^{i 2\pi n\frac{x}{T}} \\
&=\frac{2}{T}\sum_{n\, even}e^{i 2\pi n\frac{x}{T}} \\
&=\frac{2}{T}\sum_{n\in\mathbb{Z}}e^{i 4\pi n\frac{x}{T}} =2\delta_{T}(2x).
\end{aligned}
\end{equation}

Another identity was used (in \eqref{Sid}) :
\begin{equation}
\begin{aligned}
\delta(E_{k_1k_2}-E_{p_1p_2})\delta_{2\pi}(k_1+k_2-p_1-p_2)=\frac{1}{2|\sin k_1-\sin k_2|}(\delta_{2\pi}&(k_1 - p_1)\delta_{2\pi}(k_2-p_2)\\+&\delta_{2\pi}(k_1 - p_2)\delta_{2\pi}(k_2-p_1)).
\end{aligned}
\end{equation}
where $E_{q_1q_2}=2\cos{q_1}+2\cos{q_2}=4\cos{q_+}\cos{q_-}$ with $q_\pm=\frac{q_1\pm q_2}{2}$.

Starting with the left side, we can simplify it
\begin{equation}
\begin{aligned}
&\delta(E_{k_1k_2}-E_{p_1p_2})\delta_{2\pi}(k_1+k_2-p_1-p_2)\\&\hspace{1cm}=\delta(4\cos{k_+}\cos{k_-}-4\cos{p_+}\cos{p_-})\delta_{2\pi}(2k_+-2p_+)\\
&\hspace{1cm}=\delta(4\cos{k_+}\cos{k_-}-4\cos{p_+}\cos{p_-})\frac{1}{2}(\delta_{2\pi}(k_+-p_+)+\delta_{2\pi}(k_+-p_++\pi))\\
&\hspace{1cm}=\frac{1}{8|\cos{k_+}|}(\delta(\cos{k_-}-\cos{p_-})\delta_{2\pi}(k_+-p_+)\\&\hspace{3cm}+\delta(\cos{k_-}+\cos{p_-})\delta_{2\pi}(k_+-p_++\pi)),
\end{aligned}
\end{equation}

where we used the identity \eqref{1id} on the second line.

The first delta of cosines can be further simplified to a sum of Dirac combs:
\begin{equation}
\begin{aligned}
\delta(\cos{k_-}-\cos{p_-})=\frac{1}{|\sin{k_-}|}(\delta_{2\pi}(k_--p_-)+\delta_{2\pi}(k_-+p_-)),
\end{aligned}
\end{equation}
and similarly for the second product. So we have the full expression
\begin{equation}
\begin{aligned}
\frac{1}{8|\cos{k_+}\sin{k_-}|}(&\delta_{2\pi}(k_--p_-)\delta_{2\pi}(k_+-p_+)+
\delta_{2\pi}(k_-+p_-)\delta_{2\pi}(k_+-p_+)+\\
&\delta_{2\pi}(k_--p_-+\pi)\delta_{2\pi}(k_+-p_++\pi)+
\delta_{2\pi}(k_-+p_-+\pi)\delta_{2\pi}(k_+-p_++\pi)),
\label{deltaexp}
\end{aligned}
\end{equation}
and we want to go back to the original variables, $k_1,k_2,p_1$ and $p_2$.

We compute in detail the first product of the above expression.

\begin{lemma} \label{lem}
\begin{equation}
\begin{aligned}
\frac{1}{2}\delta_{2\pi}(k_--p_-)\delta_{2\pi}(k_+-p_+)=&\delta_{4\pi}(k_1-p_1)\delta_{4\pi}(k_2-p_2)\\&+\delta_{4\pi}(k_1-p_1+2\pi)\delta_{4\pi}(k_2-p_2+2\pi).    
\end{aligned}    
\end{equation}

\end{lemma}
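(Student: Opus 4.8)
The plan is to prove this purely at the level of Fourier series, using the Dirac-comb representation of Eq.~\eqref{1id}, which is the same tool already used for the other identities in this appendix. First I would introduce the shorthand $a=k_1-p_1$ and $b=k_2-p_2$, so that the arguments on the left-hand side become $k_--p_-=(a-b)/2$ and $k_+-p_+=(a+b)/2$. Expanding both combs via Eq.~\eqref{1id} with period $2\pi$, the left-hand side becomes
\begin{equation}
\frac{1}{2}\delta_{2\pi}\!\left(\tfrac{a-b}{2}\right)\delta_{2\pi}\!\left(\tfrac{a+b}{2}\right)=\frac{1}{8\pi^2}\sum_{n,m\in\mathbb{Z}} e^{i(n+m)a/2}\,e^{i(m-n)b/2}.
\end{equation}

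The crucial step is the change of summation indices $p=n+m$, $q=m-n$. This map is a bijection from $\mathbb{Z}^2$ onto the sublattice of pairs $(p,q)$ with $p\equiv q\pmod 2$ (equivalently $p+q$ even), since $p+q=2m$ and $p-q=2n$ force both coordinates into that parity class and every such pair arises exactly once. After reindexing, the left-hand side reads $\frac{1}{8\pi^2}\sum_{p\equiv q} e^{ipa/2}e^{iqb/2}$, a sum restricted to the same-parity sublattice.

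For the right-hand side I would expand each $\delta_{4\pi}$ using Eq.~\eqref{1id} with period $4\pi$, giving a prefactor $\tfrac{1}{16\pi^2}$ and phases $e^{ipa/2}e^{iqb/2}$ for the first term. The second term carries the shifts $a\to a+2\pi$, $b\to b+2\pi$, which multiply the $(p,q)$ summand by $e^{ip\pi}e^{iq\pi}=(-1)^{p+q}$. Summing the two terms produces the factor $1+(-1)^{p+q}$, which equals $2$ exactly on the same-parity sublattice and $0$ otherwise---this is the discrete \emph{lattice folding} that selects precisely the support found on the left. The surviving factor of $2$ converts the prefactor $\tfrac{1}{16\pi^2}$ into $\tfrac{1}{8\pi^2}$, so both sides collapse to the identical restricted sum and the lemma follows.

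The only genuinely delicate point is the index bookkeeping in the reindexing: one must verify that $(n,m)\mapsto(n+m,m-n)$ really is a bijection \emph{onto}---not merely into---the even-parity sublattice, so that no terms are double-counted or omitted. Everything else is routine substitution of Eq.~\eqref{1id}. I would close by noting that this is the Fourier-space manifestation of the change of variables $(k_1,k_2)\leftrightarrow(k_+,k_-)$, whose Jacobian $1/2$ is exactly what forces the period to double (from $2\pi$ to $4\pi$) while splitting the single product of combs into the two cosets---the even-even and odd-odd shifts---that appear on the right-hand side.
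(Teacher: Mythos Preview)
Your proposal is correct and follows essentially the same route as the paper: both arguments expand the Dirac combs via Eq.~\eqref{1id}, reduce to a double Fourier sum, and use the parity selector $1+(-1)^{p+q}$ (the paper writes it as $1+(-1)^{n+m}$ with $l=n+m$) together with an index bijection onto the same-parity sublattice. The only cosmetic difference is that the paper starts from the right-hand side and reindexes toward the left, whereas you work from both sides toward the common restricted sum $\tfrac{1}{8\pi^2}\sum_{p\equiv q}e^{ipa/2}e^{iqb/2}$.
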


\begin{proof}
Let us expand the factors on the right hand side as Fourier series:

\begin{equation}
\begin{aligned}
\frac{1}{(4\pi)^2}\sum_{n,m\in\mathbb{Z}}e^{i n\frac{k_1-p_1}{2}}&e^{i m\frac{k_2-p_2}{2}}(1+(-1)^{n+m}) =\frac{1}{(4\pi)^2}\sum_{n,l\in\mathbb{Z}}e^{i n\frac{k_1-p_1}{2}}e^{i (l-n)\frac{k_2-p_2}{2}}(1+(-1)^{l}) \\
&=\frac{2}{(4\pi)^2}\sum_{\substack{n\in\mathbb{Z}\\ l\,even}} e^{i n (k_--p_-)}e^{i l\frac{k_+-k_--p_++p_-}{2}}\\
&=\frac{2}{(4\pi)^2}\sum_{n,l\in\mathbb{Z}}e^{i n (k_--p_-)}e^{i l (k_+-k_--p_++p_-)}\\
&=\frac{1}{2}\frac{1}{(2\pi)^2}\sum_{q,l\in\mathbb{Z}}e^{i q (k_--p_-)}e^{i l (k_+-p_+)}=\frac{1}{2}\delta_{2\pi}(k_--p_-)\delta_{2\pi}(k_+-p_+),
\end{aligned}
\end{equation}
where we set $l=m+n$ in the first line and $q=n-l$ in the last line.
\end{proof}

Similarly, by substituting $k_1 \rightarrow k_1+2\pi$ in Lemma \ref{lem}, we get an identity for the third delta product of Eq.\ \eqref{deltaexp}:
\begin{equation}
\begin{aligned}
\frac{1}{2}\delta_{2\pi}(k_--p_-+\pi)\delta_{2\pi}(k_+-p_++\pi)=&\delta_{4\pi}(k_1-p_1+2\pi)\delta_{4\pi}(k_2-p_2)\\&+\delta_{4\pi}(k_1-p_1)\delta_{4\pi}(k_2-p_2+2\pi).
\end{aligned}    
\end{equation}

So, summing the first and third delta products in Eq.\ \eqref{deltaexp}, we get a $2\pi$ periodic expression instead of a $4\pi$ periodic one:
\begin{equation}
\begin{aligned}
\frac{1}{2}\delta_{2\pi}(k_--p_-+\pi)\delta_{2\pi}(k_+-p_++\pi)+\frac{1}{2}\delta_{2\pi}(k_--p_-)&\delta_{2\pi}(k_+-p_+)\\&=\delta_{2\pi}(k_1-p_1)\delta_{2\pi}(k_2-p_2).
\end{aligned}
\end{equation}

Similarly, the same steps can be done for the second and fourth delta products in Eq.\ \eqref{deltaexp}, by exchanging $k_1$ and $k_2$. Finally, we can simplify Eq.\ \eqref{deltaexp} to

\begin{equation}
\begin{aligned}
\frac{1}{8|\cos{k_+}\sin{k_-}|}(2\delta_{2\pi}(k_1 - p_1)\delta_{2\pi}(k_2-p_2)+2\delta_{2\pi}(k_1 - p_2)\delta_{2\pi}(k_2-p_1))\\
=\frac{1}{2|\sin k_1-\sin k_2|}(\delta_{2\pi}(k_1 - p_1)\delta_{2\pi}(k_2-p_2)+\delta_{2\pi}(k_1 - p_2)\delta_{2\pi}(k_2-p_1)).
\end{aligned}
\end{equation}

\end{document}